\documentclass[12pt]{article}
\usepackage{enumitem}
\usepackage{amsmath}
\usepackage{amsthm}
\usepackage{fancyhdr}
\usepackage{mathrsfs}
\usepackage{natbib}
\usepackage[margin=1in]{geometry}
\usepackage{color}
\usepackage{multirow,array}
\usepackage{amsmath,amsthm,amssymb}
{
\theoremstyle{plain}
\newtheorem{theorem}{Theorem}[section] \newtheorem{proposition}{Proposition}[section] 
\newtheorem{assumption}{Assumption}

\newtheorem{remark}{Remark}

\newtheorem{corollary}{Corollary}
  }
\usepackage[colorlinks = true,
            linkcolor = blue,
            urlcolor  = blue,
            citecolor = blue,
            anchorcolor = blue]{hyperref}
\usepackage{hyperref}
\usepackage{verbatim}
\usepackage{tikz}
\usepackage{mathtools}

\usepackage{appendix}
\usepackage{float}
\usepackage{graphicx}
\usepackage{caption}
\usepackage{subcaption}
\usepackage[utf8]{inputenc}
\usepackage[T1]{fontenc}    
\usepackage{hyperref}       
\usepackage{url}            
\usepackage{booktabs}       
\usepackage{amsfonts}       
\usepackage{nicefrac}       
\usepackage{microtype}      

\usepackage{dsfont}
\usepackage{indentfirst}

\usepackage{adjustbox}

\numberwithin{equation}{section}

\title{Identification via Distributional Shifts without Exclusion Restrictions
}
\author{Xunkang Tian \footnote{European Research University. Email: \href{xunkang.tian@eruni.org}{xunkang.tian@eruni.org}} \\
Nan Zhi \footnote{University of Florida. Email: \href{nan.zhi@ufl.edu}{nan.zhi@ufl.edu}}
}
\date{\today}

\begin{document}

\maketitle

\begin{abstract}
This paper studies identification and inference in a triangular system with an endogenous regressor when exclusion restrictions are unavailable and the dependence between structural disturbances is modeled through an unrestricted control function. In this setting, standard orthogonality conditions do not deliver point identification, as the unknown control function can rationalize a wide range of structural coefficients. We show that identifying information can be extracted from distributional shifts in the first-stage disturbance induced by an auxiliary variable that may directly affect the outcome. Imposing a local restriction on the log density ratio, together with an explicit bound on the sieve approximation error of the control function, we derive moment inequalities that restrict the structural parameter. We develop a practical inference procedure based on test inversion and multiplier bootstrap that accommodates generated regressors, cross-fitted sieve estimation, and locally estimated density-ratio nuisances. The results clarify how identification can be recovered from weak local distributional structure in the absence of classical instruments.
\end{abstract}

\begin{center}
{\small \textbf{Keywords:} Distributional Shift, Triangular System, Sieve Approximation, Moment Inequalities}
\end{center}

\clearpage

\section{Introduction}\label{sec:intro}

A central problem in applied econometrics is the identification of causal effects when an endogenous regressor is correlated with the structural disturbance and a credible excluded instrument is unavailable. Classical instrumental variables strategies rely on exclusion restrictions, but in many empirical settings such restrictions are difficult to justify. This paper studies what can be learned about the coefficient on an endogenous regressor in a triangular system when the researcher is unwilling to impose a classical exclusion restriction and allows for flexible, potentially nonlinear dependence between structural disturbances.

We consider a standard triangular framework in which endogeneity arises through an unknown control function linking the disturbances in the outcome and first-stage equations. The control function is left nonparametric and unrestricted. Without additional structure, the coefficient of interest is generally not point identified, since the unknown control function can absorb a wide range of values of the structural parameter.

A large literature has proposed alternative identification strategies in the absence of conventional instruments. One approach exploits heteroskedasticity to construct internal instruments \citep{lewbel2012using}. Related work studies identification in triangular systems without exclusion restrictions under structured assumptions on error variances \citep{klein2010estimating}. Our approach differs from these contributions. Rather than imposing global parametric structure to recover point identification directly, we derive informative moment inequalities from weak, local restrictions on how an auxiliary variable shifts the distribution of the first-stage disturbance.

Specifically, we allow an auxiliary variable to affect the distribution of the first-stage shock, but not require it to satisfy an exclusion restriction in the outcome equation. We model this distributional shift through a local restriction on the log density ratio between groups. Economically, this restriction can be interpreted as a local change in dispersion or volatility. Technically, it corresponds to a local exponential-tilting approximation with an explicit sensitivity parameter that bounds misspecification.

We combine this distributional structure with a local sieve approximation of the unknown control function. By bounding the sieve approximation error and exploiting orthogonality conditions implied by the triangular structure, we obtain a set of moment inequalities that restrict the structural parameter. These inequalities become informative when the distributional shift is locally relevant and the control function can be well approximated. We show that the identified set shrinks as the sieve approximation error decreases, and under a local relevance condition, point identification is obtained in the limit.

For inference, we construct sample analogues of the moment inequalities using generated first-stage residuals, cross-fitted local sieve regression, and a locally estimated density ratio. Confidence sets are obtained by inverting a max-type test statistic with multiplier bootstrap critical values and intersecting across a finite grid of tuning parameters. The procedure accommodates generated regressors and nuisance estimation while maintaining valid asymptotic coverage.

The contribution of the paper is twofold. First, it provides a new way to extract identifying content in triangular systems with an unrestricted control function, using distributional shifts rather than exclusion restrictions. Second, it delivers a fully operational inference procedure that transparently incorporates approximation error and local misspecification through sensitivity parameters. The results clarify how much identification can be recovered from weak distributional structure and how this identification strength depends on approximation quality and local relevance.

\section{Triangular System with Flexible Dependence}

\subsection{Model Setup}

We consider the following triangular system:
\begin{align}
Y_1 &= X'\beta_1 + \gamma_1 Y_2 + \varepsilon_1, \label{eq:secondstage}  \\
Y_2 &= X'\beta_2 + \varepsilon_2, \label{eq:firststage}
\end{align}
where $X \in \mathbb{R}^k$ is a vector of exogenous covariates,
$Y_2$ is a potentially endogenous regressor,
and $(\varepsilon_1,\varepsilon_2)$ are structural disturbances.

The key feature of the model is that we allow for a flexible, possibly nonlinear dependence between $\varepsilon_1$ and $\varepsilon_2$. Specifically, we assume
\begin{equation}
\varepsilon_1 = h(\varepsilon_2) + \eta,
\label{eq:control-decomp}
\end{equation}
where $h:\mathbb{R}\to\mathbb{R}$ is an unknown measurable function and $\eta$ is a remainder disturbance.

Substituting \eqref{eq:control-decomp} into \eqref{eq:secondstage}, the structural equation becomes
\begin{equation}
Y_1 = X'\beta_1 + \gamma_1 Y_2 + h(\varepsilon_2) + \eta.
\label{eq:structural}
\end{equation}

Equation \eqref{eq:structural} makes clear that endogeneity of $Y_2$ arises from the nonzero dependence between $\varepsilon_1$ and $\varepsilon_2$, captured through the unknown function $h(\varepsilon_2)$.

\begin{assumption}\label{ass:baseline}
The following conditions hold.
\begin{enumerate}
    \item\label{ass:exoX}
    The random vector $X$ is mean independent of $\varepsilon_1$ and $\varepsilon_2$: 
    $E[\varepsilon_1 \mid X] = 0$, 
    $E[\varepsilon_2 \mid X] = 0$.
    
    \item\label{ass:cf}
    The disturbance $\eta$ satisfies 
    $E[\eta \mid X, \varepsilon_2] = 0$.

    \item\label{ass:reg}
    The second moments are finite:
    $E\|X\|^2 < \infty$, 
    $E[\varepsilon_2^2] < \infty$, 
    $E[\eta^2] < \infty$.

    \item\label{ass:pd}
    The matrix
    $\Sigma_{XX} := E[XX']$ 
    is positive definite.
\end{enumerate}
\end{assumption}

Assumption~\ref{ass:baseline}.\ref{ass:exoX} imposes mean independence of $X$ from both $\varepsilon_2$ and $\eta$. 
Assumption~\ref{ass:baseline}.\ref{ass:cf} requires that conditional on 
$(X,\varepsilon_2)$, the remaining disturbance $\eta$ has mean zero.  
Thus any systematic dependence between $\varepsilon_1$ and 
$\varepsilon_2$ is captured by $h(\varepsilon_2)$. 
Assumption~\ref{ass:baseline}.\ref{ass:reg} is a standard regularity 
condition guaranteeing the existence of second moments. 
Assumption~\ref{ass:baseline}.\ref{ass:pd} is a basic full-rank condition for identification.

Under \eqref{eq:control-decomp}, we know 
$E[\varepsilon_1 \mid \varepsilon_2] = h(\varepsilon_2)$. 
Therefore, unless $h$ is constant, $\varepsilon_1$ and $\varepsilon_2$ are correlated, implying that $Y_2$ is endogenous in \eqref{eq:secondstage}.

If $h(\cdot)$ were known, one could treat $h(\varepsilon_2)$ as a control function and recover $\gamma_1$ by conditioning on $\varepsilon_2$. However, $h$ is unknown and unrestricted. In the absence of a valid excluded instrument, the structural parameters $(\beta_1,\gamma_1)$ are generally not point identified.

The central problem of this paper is therefore how to extract identifying information about $(\beta_1,\gamma_1)$ in the presence of an unknown nonlinear control function $h(\varepsilon_2)$ and without relying on classical exclusion restrictions. 


\subsection{Baseline Orthogonality}

Assumption~\ref{ass:baseline}.\ref{ass:cf} implies a broad class of orthogonality restrictions.
For any measurable function $f:\mathbb{R}\to\mathbb{R}$ such that
$E|f(\varepsilon_2)\eta|<\infty$, we have
\begin{align*}
E\!\left[f(\varepsilon_2)\eta\right]
&=
E\!\Big[\,E\{f(\varepsilon_2)\eta\mid X,\varepsilon_2\}\,\Big]
=
E\!\Big[\,f(\varepsilon_2)\,E(\eta\mid X,\varepsilon_2)\,\Big]
=
0.
\end{align*}
Using $\varepsilon_1=h(\varepsilon_2)+\eta$ and \eqref{eq:secondstage}, this yields the
\emph{population orthogonality} condition
\begin{equation}
E\!\left[
f(\varepsilon_2)\,
\big\{Y_1 - X'\beta_1 - \gamma_1 Y_2 - h(\varepsilon_2)\big\}
\right]
=0
\qquad\text{for all admissible } f.
\label{eq:orth-pop}
\end{equation}

Equation \eqref{eq:orth-pop} is mechanically implied by the restriction in Assumption~\ref{ass:baseline}.\ref{ass:cf}. In this sense, these moments are ``baseline'': they do not
create identifying power by themselves, because the unknown function $h(\cdot)$
enters inside the moment in a fully unrestricted way. Even taking a large
dictionary $\{f_1,\dots,f_L\}$ does not fundamentally alter this issue; it simply
generates a larger system of moment equations:
\begin{equation}
E\!\left[
f_\ell(\varepsilon_2)\,
\big\{Y_1 - X'\beta_1 - \gamma_1 Y_2 - h(\varepsilon_2)\big\}
\right]
=0,
\qquad \ell=1,\dots,L.
\label{eq:L-moments}
\end{equation}
Because $h(\varepsilon_2)$ is itself an unknown function of $\varepsilon_2$,
multiplying by an additional function $f(\varepsilon_2)$ typically does not
generate qualitatively new information.

The orthogonality conditions in \eqref{eq:orth-pop} involve the latent shock
$\varepsilon_2$. However, in our triangular system, $\varepsilon_2$ is
\emph{proxy-observable} through the first stage.

Under Assumption~\ref{ass:baseline}.\ref{ass:exoX} and  \ref{ass:baseline}.\ref{ass:pd}, $\beta_2$ is point identified from \eqref{eq:firststage},
and the OLS estimator $\hat\beta_2$ is consistent. Define the generated residual
\begin{equation}
\hat\varepsilon_{2i}
:=
Y_{2i}-X_i'\hat\beta_2,
\qquad
\hat\beta_2 \xrightarrow{p} \beta_2,
\qquad
\hat\varepsilon_{2i}\xrightarrow{p}\varepsilon_{2i}.
\label{eq:generated-eps2}
\end{equation}
Thus, any moment condition involving $\varepsilon_2$ can be implemented
asymptotically by replacing $\varepsilon_2$ with $\hat\varepsilon_2$, subject to
standard regularity conditions for generated regressors.

To operationalize \eqref{eq:L-moments}, one must impose some structure on
$h(\cdot)$. A convenient approach is a sieve approximation.
Let $\{\phi_m\}_{m\ge 1}$ be a basis on $\mathbb{R}$ and define the sieve class
\[
\mathcal H_M
:=
\left\{
h_{\bar \theta}(\cdot)=\sum_{m=1}^M \bar \theta_m \phi_m(\cdot)
:\ \bar \theta\in\mathbb{R}^M
\right\}.
\]
For a candidate $(\beta_1,\gamma_1,\bar \theta)$, define the sample moments
\begin{equation}
\hat g_n(\beta_1,\gamma_1,\bar \theta)
=
\frac{1}{n}\sum_{i=1}^n
f(\hat\varepsilon_{2i})\,
\Big\{Y_{1i} - X_i'\beta_1 - \gamma_1 Y_{2i} - h_{\bar \theta}(\hat\varepsilon_{2i})\Big\}.
\label{eq:sample-moments}
\end{equation}
Under Assumption~\ref{ass:baseline}.\ref{ass:exoX}--\ref{ass:baseline}.\ref{ass:pd} and standard conditions (LLN, consistency of $\hat\beta_2$,
boundedness/smoothness of $f$ and $\{\phi_m\}$), we have
\[
\hat g_n(\beta_1,\gamma_1,\bar \theta)
\xrightarrow{p}
g(\beta_1,\gamma_1,\bar \theta)
:=
E\!\left[
f(\varepsilon_2)\,
\{Y_1 - X'\beta_1 - \gamma_1 Y_2 - h_{\bar \theta}(\varepsilon_2)\}
\right].
\]
At the true parameters, taking $h_{\bar \theta}=h$ gives
$g(\beta_1,\gamma_1,\bar \theta)=0$ by \eqref{eq:orth-pop}.

Even after restricting $h$ to a sieve class, $(\beta_1,\gamma_1)$ is typically
not point identified from \eqref{eq:sample-moments}--\eqref{eq:L-moments}.
Intuitively, the unknown function $h(\varepsilon_2)$ absorbs the very component
that induces endogeneity, and without exclusion restrictions there is no
external variation to separately pin down $\gamma_1$ from $h(\cdot)$.
As the sieve dimension $M$ grows, the model becomes increasingly flexible, and
a wider range of $(\beta_1,\gamma_1)$ can be rationalized by some
$h_{\bar \theta}\in\mathcal H_M$.

Although the orthogonality conditions are weak, it is useful to view them through
a standard GMM lens to clarify the nature of the identification problem.

Let $\{f_1,\dots,f_L\}$ be a finite dictionary and stack the corresponding sample
moments as $\hat g_{n,L}(\beta_1,\gamma_1,\bar \theta)\in\mathbb{R}^L$. Let
$\hat\Sigma_{n}$ be a consistent estimator of the asymptotic covariance of
$\sqrt{n}\hat g_{n,L}$. Define the quadratic form
\begin{equation}
J_n(\beta_1,\gamma_1,\bar \theta)
=
n\,\hat g_{n,L}(\beta_1,\gamma_1,\bar \theta)'\,
\hat\Sigma_n^{-1}\,
\hat g_{n,L}(\beta_1,\gamma_1,\bar \theta).
\label{eq:Jstat}
\end{equation}
For fixed $(\beta_1,\gamma_1,\bar \theta)$ satisfying the population moments, $J_n$
has an asymptotic $\chi^2$ distribution under standard regularity. This suggests
an inversion-based \emph{set} for $(\beta_1,\gamma_1)$:
\begin{equation}
\mathcal{C}_{1-\alpha}
=
\Big\{
(\beta_1,\gamma_1):
\ \exists\bar \theta\in\mathbb{R}^M \ \text{s.t.}\
J_n(\beta_1,\gamma_1,\bar \theta)\le \chi^2_{L,1-\alpha}
\Big\}.
\label{eq:conf-set}
\end{equation}

The set \eqref{eq:conf-set} is typically very wide when $h$ is flexible and $M$
is moderately large, reflecting weak identifying content of the baseline
orthogonality. This motivates the need for additional structure that provides
genuine identifying power without relying on classical exclusion restrictions.

In the next section, we introduce such structure by exploiting a cross-group
distributional shift in $\varepsilon_2$ induced by an auxiliary variable $Z_0$,
and by controlling the approximation error of $h$ using sieve arguments.

\section{Identification via Distributional Shift Induced by Instruments}

\subsection{Setup of Instruments}

To generate identifying variation beyond the baseline orthogonality
conditions, we introduce an auxiliary binary variable
\[
Z_0 \in \{0,1\}.
\]

Importantly, $Z_0$ is \emph{not} assumed to satisfy a classical exclusion
restriction for the structural equation \eqref{eq:secondstage}.
Instead, $Z_0$ is allowed to affect the distribution of the first-stage
disturbance $\varepsilon_2$.

Define conditional expectations
\[
E_d[\cdot] := E[\cdot \mid Z_0=d],
\qquad
\Delta E[\cdot] := E_1[\cdot] - E_0[\cdot].
\]

Let
\[
u := \varepsilon_2 - e_0
\]
denote the first-stage disturbance centered around a fixed expansion
point $e_0 \in \mathbb{R}$.

Our identification strategy exploits systematic differences between
the conditional distributions of $u$ across $Z_0=0$ and $Z_0=1$.

Let $f_d(u)$ denote the conditional density of $u$ given $Z_0=d$.
Define the log-density ratio
\begin{equation}
\ell(u)
:=
\log \frac{f_1(u)}{f_0(u)}.
\label{eq:log-density-ratio}
\end{equation}

By construction,
\begin{equation}
f_1(u) = e^{\ell(u)} f_0(u),
\label{eq:density-ratio}
\end{equation}
and therefore for any measurable function $g$ such that the expectations exist,
\begin{equation}
\Delta E[g(u)]
=
E_1[g(u)] - E_0[g(u)]
=
E_0\!\left[g(u)\big(e^{\ell(u)} - 1\big)\right].
\label{eq:deltaE-basic}
\end{equation}

Equation \eqref{eq:deltaE-basic} is an exact identity and forms the
starting point of our identification analysis.

We impose a weak structural restriction on the local shape of the
log-density ratio.

\begin{assumption}[Local quadratic tilting with bounded remainder]
\label{ass:local-tilting}
Fix a bandwidth $b>0$ and define the local window
\[
\mathcal U_b := \{u: |u| \le b\}.
\]
There exist scalars $a\in\mathbb{R}$, $\lambda\in\mathbb{R}$,
and a measurable function $r(\cdot)$ such that
\begin{equation}
\ell(u)
=
a + \lambda u^2 + r(u),
\qquad \forall u\in \mathcal U_b,
\label{eq:lr-decomp}
\end{equation}
and
\begin{equation}
\sup_{u\in\mathcal U_b} |r(u)|
\le \delta_b,
\label{eq:r-bound}
\end{equation}
where $\delta_b \ge 0$ is a sensitivity parameter.
\end{assumption}

Assumption~\ref{ass:local-tilting} does not impose exact exponential
tilting. It only restricts the local curvature of the log-density
ratio up to a uniformly bounded misspecification $\delta_b$.
When $\delta_b=0$, the model reduces locally to exact quadratic tilting.

\begin{remark}
Assumption~\ref{ass:local-tilting} is best understood as a restriction on how the
auxiliary variable $Z_0$ shifts the \emph{shape} of the first-stage disturbance
$u=\varepsilon_2-e_0$ locally around $u=0$. The quadratic component $\lambda u^2$
captures a change in \emph{local dispersion} (or tail thickness) across groups,
while allowing for a bounded local misspecification $r(u)$.

A natural data-generating mechanism is a \emph{state-dependent volatility} model.
Suppose
\[
u=\sigma(Z_0)\,\nu,\qquad E[\nu]=0,
\]
where $\nu$ has a common density $f_\nu$ across $Z_0$ and $\sigma(1)\neq\sigma(0)$.
Then the conditional densities satisfy
\[
f_d(u)=\frac{1}{\sigma(d)}\,f_\nu\!\left(\frac{u}{\sigma(d)}\right),
\qquad d\in\{0,1\},
\]
and hence the log-density ratio is
\begin{equation}\label{eq:ell-scale}
\ell(u)=\log\frac{f_1(u)}{f_0(u)}
=
\log\frac{\sigma(0)}{\sigma(1)}
+
\log f_\nu\!\left(\frac{u}{\sigma(1)}\right)
-
\log f_\nu\!\left(\frac{u}{\sigma(0)}\right).
\end{equation}
If $\log f_\nu(\cdot)$ is twice continuously differentiable at $0$ and $\nu$ is
symmetric (so that $(\log f_\nu)'(0)=0$), a second-order Taylor expansion of
\eqref{eq:ell-scale} around $u=0$ yields
\[
\ell(u)
=
a
+
\lambda u^2
+
r(u),
\qquad |u|\le b,
\]
where $\lambda$ is proportional to the difference between $1/\sigma(1)^2$ and
$1/\sigma(0)^2$, and the remainder $r(u)$ is of order $O(b^3)$ under standard
smoothness of $\log f_\nu$. Thus, locally, the density ratio is well
approximated by a quadratic tilting term even when the global shapes of $f_0$
and $f_1$ are non-Gaussian.

A generic local expansion would allow $\ell(u)=a+\kappa u+\lambda u^2+r(u)$.
The absence of the linear term can be interpreted as a \emph{local balancing}
condition at the expansion point $e_0$: we focus on locations where the two
groups are locally matched in the sense that $\ell'(0)\approx 0$ (hence
$\kappa\approx 0$), so that the dominant local difference is in curvature.
This is consistent with the role of $e_0$ as a tuning parameter: in practice,
we search over a grid of $e_0$ values, and the sensitivity parameter $\delta_b$
captures the extent to which $\ell$ deviates from a purely quadratic shape on
$\mathcal U_b$.

The bound $\sup_{|u|\le b}|r(u)|\le\delta_b$ quantifies local departures from the
idealized ``variance-shift'' benchmark. Economically, larger $\delta_b$ allows
for additional local distortions (e.g., mild skewness differences, higher-order
tail differences, or small local mean shifts not fully absorbed by $X$), while
keeping the identifying content interpretable through the curvature parameter
$\lambda$.
\end{remark}

Since $f_1$ must integrate to one, the density-ratio representation
implies the normalization condition
\begin{equation}
E_0[e^{\ell(u)}] = 1.
\label{eq:normalization}
\end{equation}

Under \eqref{eq:lr-decomp}, this becomes
\[
E_0[e^{a + \lambda u^2 + r(u)}]
=
e^{a} E_0[e^{\lambda u^2 + r(u)}]
=
1.
\]
Hence $a$ is not free but must satisfy
\begin{equation}
e^{-a}
=
E_0[e^{\lambda u^2 + r(u)}].
\label{eq:a-exact}
\end{equation}

Using the bound \eqref{eq:r-bound}, we obtain
\[
e^{-\delta_b} E_0[e^{\lambda u^2}]
\le
E_0[e^{\lambda u^2 + r(u)}]
\le
e^{\delta_b} E_0[e^{\lambda u^2}],
\]
which implies
\begin{equation}
-\mathcal A(\lambda) - \delta_b
\;\le\;
a
\;\le\;
-\mathcal A(\lambda) + \delta_b,
\label{eq:a-interval}
\end{equation}
where
\[
\mathcal A(\lambda)
:=
\log E_0[e^{\lambda u^2}].
\]

Thus the intercept $a$ is restricted to a narrow interval
determined by $(\lambda,\delta_b)$.
In the limit $\delta_b\to 0$, $a$ collapses to the exact
cumulant normalization.

From \eqref{eq:r-bound}, we obtain
\[
e^{-\delta_b}
\le
e^{r(u)}
\le
e^{\delta_b},
\]
and therefore
\begin{equation}
e^{-\delta_b} - 1
\;\le\;
e^{r(u)} - 1
\;\le\;
e^{\delta_b} - 1.
\label{eq:directional-bound}
\end{equation}

Combining \eqref{eq:deltaE-basic} and \eqref{eq:lr-decomp}, we obtain
\begin{align}
\Delta E[g(u)]
&=
E_0\!\left[
g(u)\Big(e^{a} e^{\lambda u^2} e^{r(u)} - 1\Big)
\right]
\notag\\
&=
E_0\!\left[
g(u)\Big(e^{a} e^{\lambda u^2} - 1\Big)
\right]
+
E_0\!\left[
g(u)\, e^{a} e^{\lambda u^2}\big(e^{r(u)} - 1\big)
\right].
\label{eq:deltaE-decomp}
\end{align}

The second term in \eqref{eq:deltaE-decomp} is controlled by
the bound \eqref{eq:directional-bound}, while the first term
depends only on $(a,\lambda)$ and the distribution of $u$
under $Z_0=0$.

\medskip

Equation \eqref{eq:deltaE-decomp} shows that cross-group differences
in moments of $u$ are governed by a low-dimensional curvature parameter
$\lambda$ up to a bounded misspecification $\delta_b$.
In the next section, we show how these cross-group moment shifts
translate into identifying restrictions for the structural parameters
$(\beta_1,\gamma_1)$.

\subsection{FWL Representation and Residualized Equations}

Now we connect the distributional shift in the first-stage disturbance induced by $Z_0$ to restrictions on the structural parameters $(\beta_1,\gamma_1)$.
We proceed in three steps.
First, we remove the contribution of $X$ via the Frisch--Waugh--Lovell (FWL) representation, thereby isolating the endogenous component $\varepsilon_2$.
Second, we approximate the unknown function $h(\cdot)$ locally by a sieve projection, which yields a tractable residual decomposition. 
Third, we combine the sieve residual with the density-ratio identity to obtain moment inequalities.

Let $M_X$ denote the (population) residual-maker that projects out the linear
span of $X$. Specifically, for any square-integrable scalar random variable $V$,
define
\[
M_X V := V - X' \pi_V,
\qquad
\pi_V := \arg\min_{\pi\in\mathbb{R}^k} E\big[(V-X'\pi)^2\big].
\]
Equivalently, $M_X V$ is the population residual from regressing $V$ on $X$.

Applying $M_X$ to the first-stage equation \eqref{eq:firststage} yields
\begin{equation}
\tilde Y_2
:=
M_X Y_2
=
M_X(X'\beta_2+\varepsilon_2)
=
M_X \varepsilon_2.
\label{eq:FWL-firststage}
\end{equation}
Under Assumption~\ref{ass:baseline}.\ref{ass:exoX}, $E[\varepsilon_2\mid X]=0$, hence
$\varepsilon_2$ is orthogonal to $X$ and the linear projection of $\varepsilon_2$
on $X$ is zero. Therefore,
\begin{equation}
\tilde Y_2 = \varepsilon_2.
\label{eq:tildeY2-eps2}
\end{equation}

Next apply $M_X$ to the structural equation \eqref{eq:structural}:
\begin{align}
\tilde Y_1
:=
M_X Y_1
&=
M_X\big(X'\beta_1+\gamma_1 Y_2+h(\varepsilon_2)+\eta\big)
\notag\\
&=
\gamma_1 M_X Y_2 + M_X h(\varepsilon_2) + M_X \eta.
\label{eq:FWL-structural-raw}
\end{align}
Under Assumption~\ref{ass:baseline}.\ref{ass:exoX}, $E[\eta\mid X]=0$, so $M_X\eta=\eta$.
Moreover, since $E[h(\varepsilon_2)\mid X]=0$ follows from
$E[\varepsilon_2\mid X]=0$ and the control-function structure Assumption~\ref{ass:baseline}.\ref{ass:cf},
we have $M_X h(\varepsilon_2)=h(\varepsilon_2)$.\footnote{Formally, under Assumption~\ref{ass:baseline}.\ref{ass:cf},
$E[\eta\mid X,\varepsilon_2]=0$ implies $E[\varepsilon_1\mid X,\varepsilon_2]=h(\varepsilon_2)$.
Together with $E[\varepsilon_1\mid X]=0$ (exogeneity of $X$), this yields
$E[h(\varepsilon_2)\mid X]=0$.}
Using \eqref{eq:tildeY2-eps2}, we obtain the residualized structural equation
\begin{equation}
\tilde Y_1
=
\gamma_1 \varepsilon_2 + h(\varepsilon_2) + \eta.
\label{eq:tildeY1-rep}
\end{equation}

Fix a localization point $e_0\in\mathbb{R}$ and define
\[
u:=\varepsilon_2-e_0.
\]
Let $\psi_b(u)=K(u/b)\mathbf 1\{|u|\le b\}$ be a bounded local weight with
bandwidth $b>0$.

\subsection{Local Sieve Approximation}

We approximate $h(\varepsilon_2)$ locally by a sieve projection under the
baseline group $Z_0=0$. Let $\mu_{0,b}$ denote the distribution of
$u=\varepsilon_2-e_0$ conditional on $Z_0=0$ restricted to the window
$\{|u|\le b\}$.

Let $\{p_k(u)\}_{k\ge1}$ be a complete basis for $L^2(\mu_{0,b})$.
For each $j\ge1$, define the sieve space
\[
\mathcal H_j
:=
\left\{
h_j(u)=\sum_{k=1}^j \alpha_k p_k(u)
\right\}.
\]
Define $h_j$ as the weighted $L^2(\mu_{0,b})$-projection of $h$ onto
$\mathcal H_j$:
\begin{equation}
h_j
=
\arg\min_{g\in \mathcal H_j}
E_0\!\left[\psi_b(u)\big(h(u)-g(u)\big)^2\right].
\label{eq:sieve-proj}
\end{equation}
Let the approximation error be
\[
r_j(u):=h(u)-h_j(u).
\]

\begin{assumption}[Local sieve approximation error]
\label{ass:sieve-error}
There exists a nonnegative sequence $\delta_h(j)$ such that
\begin{equation}
E_0\!\left[\psi_b(u)\,r_j(u)^2\right]
\le
\delta_h(j)^2,
\qquad
\delta_h(j)\to0 \ \ \text{as } j\to\infty.
\label{eq:sieve-rate}
\end{equation}
\end{assumption}

Assumption~\ref{ass:sieve-error} is a high-level restriction on the
local complexity of $h$ relative to the sieve basis and the window $b$.
It does not impose parametric structure on $h$.

Define the sieve-adjusted outcome
\begin{equation}
\tilde W_j
:=
\tilde Y_1 - \gamma_1 \tilde Y_2 - h_j(\tilde Y_2).
\label{eq:Wj-def}
\end{equation}
Substituting \eqref{eq:tildeY2-eps2} and \eqref{eq:tildeY1-rep} into
\eqref{eq:Wj-def} yields
\begin{align}
\tilde W_j
&=
\big(\gamma_1 \varepsilon_2 + h(\varepsilon_2) + \eta\big)
-
\gamma_1 \varepsilon_2
-
h_j(\varepsilon_2)
\notag\\
&=
\eta + \big(h(\varepsilon_2)-h_j(\varepsilon_2)\big)
=
\eta + r_j(u).
\label{eq:Wj-decomp}
\end{align}
Thus, after subtracting the sieve approximation of $h$,
the remaining nontrivial component is the sieve approximation error $r_j(u)$.

Recall that for any measurable function $g$,
\begin{equation}
\Delta E[g(u)]
=
E_1[g(u)]-E_0[g(u)]
=
E_0\!\left[g(u)\big(e^{\ell(u)}-1\big)\right],
\label{eq:deltaE-basic-repeat}
\end{equation}
where $\ell(u)=\log(f_1(u)/f_0(u))$ denotes the log-density ratio of $u$ under
$Z_0=1$ relative to $Z_0=0$.

We choose the test function
\begin{equation}
g(u):=\psi_b(u)\tilde W_j.
\label{eq:g-choice}
\end{equation}
Substituting \eqref{eq:g-choice} into \eqref{eq:deltaE-basic-repeat} yields the
exact identity
\begin{equation}
\Delta E[\psi_b(u)\tilde W_j]
=
E_0\!\left[\psi_b(u)\tilde W_j\big(e^{\ell(u)}-1\big)\right].
\label{eq:key-identity}
\end{equation}

Using the decomposition \eqref{eq:Wj-decomp}, the right-hand side becomes
\begin{align}
E_0\!\left[\psi_b(u)\tilde W_j\big(e^{\ell(u)}-1\big)\right]
&=
E_0\!\left[\psi_b(u)\big(\eta+r_j(u)\big)\big(e^{\ell(u)}-1\big)\right]
\notag\\
&=
E_0\!\left[\psi_b(u)\eta\big(e^{\ell(u)}-1\big)\right]
+
E_0\!\left[\psi_b(u)r_j(u)\big(e^{\ell(u)}-1\big)\right].
\label{eq:split-eta-rj}
\end{align}

Since $e^{\ell(u)}-1$ and $\psi_b(u)$ are measurable functions of $u$, Assumption~\ref{ass:baseline}.\ref{ass:cf} implies
\begin{align*}
E_0\!\left[\psi_b(u)\eta\big(e^{\ell(u)}-1\big)\right]
&=
E_0\!\Big[
E\{\psi_b(u)\eta(e^{\ell(u)}-1)\mid X,\varepsilon_2,Z_0\}
\Big]\\
&=
E_0\!\Big[
\psi_b(u)(e^{\ell(u)}-1)\,E(\eta\mid X,\varepsilon_2)
\Big]
=
0.
\end{align*}
Therefore \eqref{eq:key-identity} and \eqref{eq:split-eta-rj} imply
\begin{equation}
\Delta E[\psi_b(u)\tilde W_j]
=
E_0\!\left[\psi_b(u)r_j(u)\big(e^{\ell(u)}-1\big)\right].
\label{eq:remainder-only}
\end{equation}

Equation \eqref{eq:remainder-only} is the key link between the observable
cross-group difference $\Delta E[\psi_b\tilde W_j]$ and the unknown sieve
approximation error $r_j(u)$.

\subsection{Moment Inequalities}

We now bound the right-hand side of \eqref{eq:remainder-only}.
By Cauchy--Schwarz,
\begin{align}
\left|
E_0\!\left[\psi_b(u)r_j(u)\big(e^{\ell(u)}-1\big)\right]
\right|
&\le
\left(
E_0[\psi_b(u)r_j(u)^2]
\right)^{1/2}
\left(
E_0[\psi_b(u)\big(e^{\ell(u)}-1\big)^2]
\right)^{1/2}.
\label{eq:CS}
\end{align}

By Assumption~\ref{ass:sieve-error},
\[
E_0[\psi_b(u)r_j(u)^2]
\le
\delta_h(j)^2.
\]
Define
\begin{equation}
V_b
:=
E_0\!\left[\psi_b(u)\big(e^{\ell(u)}-1\big)^2\right].
\label{eq:Vb-def}
\end{equation}
Combining \eqref{eq:remainder-only}--\eqref{eq:CS} yields the bound
\begin{equation}
\left|
\Delta E[\psi_b(u)\tilde W_j]
\right|
\le
\delta_h(j)\sqrt{V_b}.
\label{eq:main-bound}
\end{equation}

Finally, we express \eqref{eq:main-bound} as two one-sided moment inequalities:
\begin{align}
\Delta E[\psi_b(u)\tilde W_j] - \delta_h(j)\sqrt{V_b}
&\le 0,
\label{eq:ineq-1}\\
-\Delta E[\psi_b(u)\tilde W_j] - \delta_h(j)\sqrt{V_b}
&\le 0.
\label{eq:ineq-2}
\end{align}

Equivalently, defining
\[
m_{1,j,b}(\theta)
:=
\delta_h(j)\sqrt{V_b}-\Delta E[\psi_b(u)\tilde W_j(\theta)],
\qquad
m_{2,j,b}(\theta)
:=
\delta_h(j)\sqrt{V_b}+\Delta E[\psi_b(u)\tilde W_j(\theta)],
\]
where $\theta$ collects the structural parameters $(\beta_1,\gamma_1)$
(and any nuisance parameters used in $h_j$),
the model implies the pair of moment inequalities
\[
m_{1,j,b}(\theta)\ge 0,
\qquad
m_{2,j,b}(\theta)\ge 0.
\]

The key identifying content in \eqref{eq:ineq-1}--\eqref{eq:ineq-2} arises from
two sources: (i) the distributional shift in $u$ across $Z_0$ encoded in
$\ell(u)$, and (ii) the shrinking sieve approximation error $\delta_h(j)$ as the
sieve dimension grows. Therefore, we are able to study how these inequalities restrict
the structural parameters and how their informativeness varies with $(j,b,e_0)$.

\begin{corollary}[Point identification of $\gamma_1$ as sieve bias vanishes]
\label{cor:point-id-gamma}

Maintain Assumption~\ref{ass:baseline} and the density-ratio setup, including the identity \eqref{eq:deltaE-basic-repeat}. 
Fix $(b,e_0)$ and define $u=\varepsilon_2-e_0$ and the local weight
$\psi_b(u)=K(u/b)\mathbf 1\{|u|\le b\}$.

For each sieve dimension $j$, let $h_j$ be the local $L^2(\mu_{0,b})$-projection
of $h$ defined in \eqref{eq:sieve-proj}, and let $r_j(u)=h(u)-h_j(u)$.
For any $\gamma\in\mathbb{R}$, define the population residualized object
\begin{equation}
W_j(\gamma)
:=
\tilde Y_1-\gamma \tilde Y_2-h_j(\tilde Y_2),
\label{eq:Wj-gamma-def}
\end{equation}
where $\tilde Y_1=M_XY_1$ and $\tilde Y_2=M_XY_2$ are the population FWL residuals.
Let the $j$-th identified set for $\gamma$ be
\begin{equation}
\Gamma_j
:=
\left\{
\gamma\in\mathbb{R}:
\left|\Delta E\!\left[\psi_b(u)\,W_j(\gamma)\right]\right|
\le
\delta_h(j)\sqrt{V_b}
\right\},
\label{eq:Gammaj-def}
\end{equation}
where $V_b:=E_0[\psi_b(u)(e^{\ell(u)}-1)^2]$ as in \eqref{eq:Vb-def}, and
$\delta_h(j)$ satisfies Assumption~\ref{ass:sieve-error}.

Assume the following \emph{local relevance} condition holds:
\begin{equation}
\Delta E\!\left[\psi_b(u)\,\varepsilon_2\right]\neq 0,
\label{eq:local-relevance}
\end{equation}
and $0<V_b<\infty$. Then:
\begin{enumerate}
\item[(i)] For each $j$, the true coefficient $\gamma_1$ belongs to $\Gamma_j$.
\item[(ii)] The diameter of $\Gamma_j$ satisfies the bound
\begin{equation}
\mathrm{diam}(\Gamma_j)
\le
\frac{2\,\delta_h(j)\sqrt{V_b}}{\left|\Delta E[\psi_b(u)\varepsilon_2]\right|}.
\label{eq:Gammaj-diam}
\end{equation}
\item[(iii)] In particular, if $\delta_h(j)\to 0$ as $j\to\infty$, then
$\mathrm{diam}(\Gamma_j)\to 0$ and hence $\Gamma_j$ shrinks to the singleton
$\{\gamma_1\}$. Equivalently, $\gamma_1$ is point identified in the limit
$j\to\infty$ by the family of restrictions \eqref{eq:Gammaj-def}.
\end{enumerate}
\end{corollary}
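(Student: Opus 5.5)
The proof exploits the fact that $W_j(\gamma)$ is affine in $\gamma$. By \eqref{eq:Wj-gamma-def} and \eqref{eq:tildeY2-eps2}, for every $\gamma\in\mathbb{R}$,
\[
W_j(\gamma)=W_j(\gamma_1)+(\gamma_1-\gamma)\tilde Y_2=\tilde W_j+(\gamma_1-\gamma)\varepsilon_2 .
\]
Multiplying by $\psi_b(u)$ and applying the linear operator $\Delta E$ gives
\[
D_j(\gamma):=\Delta E[\psi_b(u)W_j(\gamma)]=\Delta E[\psi_b(u)\tilde W_j]+(\gamma_1-\gamma)\,\Delta E[\psi_b(u)\varepsilon_2].
\]
All three parts follow from this affine representation together with the bound \eqref{eq:main-bound}.

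Part (i) is immediate. At $\gamma=\gamma_1$ we have $D_j(\gamma_1)=\Delta E[\psi_b(u)\tilde W_j]$. The chain \eqref{eq:key-identity}--\eqref{eq:remainder-only}, Cauchy--Schwarz \eqref{eq:CS}, and Assumption~\ref{ass:sieve-error} together give $|D_j(\gamma_1)|\le\delta_h(j)\sqrt{V_b}$, so $\gamma_1\in\Gamma_j$. The step to check is that the conditional-mean argument killing the $\eta$ term is legitimate. That argument uses $E(\eta\mid X,\varepsilon_2,Z_0)=E(\eta\mid X,\varepsilon_2)=0$, which I will take as implicit in the density-ratio setup, exactly as the paper does in deriving \eqref{eq:remainder-only}. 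Finiteness of the moments follows from boundedness of $\psi_b$, Assumption~\ref{ass:baseline}.\ref{ass:reg}, and $V_b<\infty$.

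Part (ii) uses the affine form directly. Write $c:=\Delta E[\psi_b(u)\varepsilon_2]$, which is nonzero by \eqref{eq:local-relevance}. Then $D_j$ is an affine function of $\gamma$ with nonzero slope $-c$. Hence $\Gamma_j=\{\gamma:|D_j(\gamma)|\le\delta_h(j)\sqrt{V_b}\}$ is a closed interval, namely the preimage of $[-\delta_h(j)\sqrt{V_b},\,\delta_h(j)\sqrt{V_b}]$ under a bijective affine map. For any $\gamma,\gamma'\in\Gamma_j$,
\[
|\gamma-\gamma'|\,|c|=|D_j(\gamma)-D_j(\gamma')|\le 2\delta_h(j)\sqrt{V_b},
\]
which is \eqref{eq:Gammaj-diam}.

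For part (iii), fix $(b,e_0)$, so that $c$ and $V_b$ do not depend on $j$. If $\delta_h(j)\to0$, the bound in (ii) tends to zero. Combined with $\gamma_1\in\Gamma_j$ for every $j$ from (i), this gives $\Gamma_j\subseteq[\gamma_1-\rho_j,\gamma_1+\rho_j]$ with $\rho_j:=2\delta_h(j)\sqrt{V_b}/|c|\to0$. Therefore $\bigcap_j\Gamma_j=\{\gamma_1\}$ and $\Gamma_j\to\{\gamma_1\}$ in the Hausdorff sense.

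The main obstacle is bookkeeping rather than analysis. One point is that $h_j$ depends on $j$ but not on $\gamma$, which is what makes the slope of $D_j$ exactly $-c$. The other is the interchange of $\Delta E$ with the linear decomposition, which requires integrability of $\psi_b(u)\varepsilon_2$ under both $Z_0=0$ and $Z_0=1$; this I would note follows from Assumption~\ref{ass:baseline}.\ref{ass:reg}.
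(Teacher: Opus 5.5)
Your proof is correct. It has the same skeleton as the paper's: $W_j(\gamma)=(\gamma_1-\gamma)\varepsilon_2+\eta+r_j(u)$, the $\eta$ term is removed by a conditional-mean argument, and $\Delta E[\psi_b(u)r_j(u)]$ is controlled by the density-ratio identity, Cauchy--Schwarz and Assumption~\ref{ass:sieve-error}.

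\textbf{Part (ii).} This is where the routes genuinely differ, and yours is sharper. Write $c=\Delta E[\psi_b(u)\varepsilon_2]$. The paper shows $|\gamma-\gamma_1|\le 2\delta_h(j)\sqrt{V_b}/|c|$ for every $\gamma\in\Gamma_j$. That is a radius bound about $\gamma_1$, and by itself it only gives $\mathrm{diam}(\Gamma_j)\le 4\delta_h(j)\sqrt{V_b}/|c|$. This suffices for (iii) but does not give the constant in \eqref{eq:Gammaj-diam}. Your pairwise bound $|c|\,|\gamma-\gamma'|=|D_j(\gamma)-D_j(\gamma')|\le 2\delta_h(j)\sqrt{V_b}$ does deliver the stated constant. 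It also shows that $\Gamma_j$ is an interval of exactly that length.

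\textbf{The $\eta$ term.} You are right to treat $E(\eta\mid X,\varepsilon_2,Z_0)=0$ as an extra hypothesis rather than a consequence of Assumption~\ref{ass:baseline}.\ref{ass:cf}. The paper obtains it by writing it as $E[E(\eta\mid X,\varepsilon_2)\mid X,\varepsilon_2,Z_0]$. But that expression equals $E(\eta\mid X,\varepsilon_2)$, not $E(\eta\mid X,\varepsilon_2,Z_0)$. If $Z_0$ shifts the mean of $\eta$ given $\varepsilon_2$, then $\Delta E[\psi_b(u)\eta]$ is in general a nonzero constant that does not depend on $j$. Part (i) then fails as soon as $2\delta_h(j)\sqrt{V_b}<|\Delta E[\psi_b(u)\eta]|$. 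The condition is automatic if $Z_0$ is among the covariates in $X$. Otherwise it is a genuine mean-exclusion condition on $Z_0$ and should be stated explicitly.

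\textbf{Ordering of steps.} Rather than citing \eqref{eq:key-identity}, follow the order of the paper's corollary proof. Equation \eqref{eq:key-identity} applies the density-ratio identity, which is valid only for functions of $u$, to $\psi_b(u)\tilde W_j$. Instead, show $\Delta E[\psi_b(u)\eta]=0$ first, and then apply the identity only to $\psi_b(u)r_j(u)$. Under your condition both orders yield \eqref{eq:remainder-only}, because the $\eta$ pieces vanish on both sides.
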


\begin{proof}
Under Assumption~\ref{ass:baseline}.\ref{ass:exoX}, the population FWL residuals
satisfy $\tilde Y_2=\varepsilon_2$ and
$\tilde Y_1=\gamma_1\varepsilon_2+h(\varepsilon_2)+\eta$
as shown in \eqref{eq:tildeY1-rep}.
Substituting into \eqref{eq:Wj-gamma-def} yields, for any $\gamma$,
\begin{align}
W_j(\gamma)
&=
\big(\gamma_1\varepsilon_2+h(\varepsilon_2)+\eta\big)-\gamma\varepsilon_2-h_j(\varepsilon_2)
\notag\\
&=
(\gamma_1-\gamma)\varepsilon_2+\eta+r_j(u).
\label{eq:Wj-gamma-decomp}
\end{align}

Because $\psi_b(u)$ is measurable with respect to $\varepsilon_2$ (hence
measurable with respect to $(X,\varepsilon_2,Z_0)$), we have
\[
E\!\left[\psi_b(u)\eta\mid Z_0\right]
=
E\!\left[E\!\left[\psi_b(u)\eta\mid X,\varepsilon_2,Z_0\right]\mid Z_0\right]
=
E\!\left[\psi_b(u)\,E[\eta\mid X,\varepsilon_2,Z_0]\mid Z_0\right].
\]
Assumption~\ref{ass:baseline}.\ref{ass:cf} gives $E[\eta\mid X,\varepsilon_2]=0$
almost surely. Hence
$E[\eta\mid X,\varepsilon_2,Z_0]
=
E\!\left[E[\eta\mid X,\varepsilon_2]\mid X,\varepsilon_2,Z_0\right]
=0$ almost surely, implying $E[\psi_b(u)\eta\mid Z_0]=0$ for each $Z_0$.
Therefore
\begin{equation}
\Delta E[\psi_b(u)\eta]=E_1[\psi_b(u)\eta]-E_0[\psi_b(u)\eta]=0.
\label{eq:DeltaE-eta-zero}
\end{equation}

Taking $\Delta E[\psi_b(\cdot)\cdot]$ on both sides of
\eqref{eq:Wj-gamma-decomp} and using \eqref{eq:DeltaE-eta-zero} gives
\begin{equation}
\Delta E[\psi_b(u)W_j(\gamma)]
=
(\gamma_1-\gamma)\Delta E[\psi_b(u)\varepsilon_2]
+
\Delta E[\psi_b(u)r_j(u)].
\label{eq:DeltaE-Wj-linear}
\end{equation}

Next, we bound the remainder term $\Delta E[\psi_b(u)r_j(u)]$.
By the density-ratio identity \eqref{eq:deltaE-basic-repeat},
\[
\Delta E[\psi_b(u)r_j(u)]
=
E_0\!\left[\psi_b(u)r_j(u)(e^{\ell(u)}-1)\right].
\]
By Cauchy--Schwarz and the definition of $V_b$ in \eqref{eq:Vb-def},
\begin{align}
\left|\Delta E[\psi_b(u)r_j(u)]\right|
&\le
\left(E_0[\psi_b(u)r_j(u)^2]\right)^{1/2}
\left(E_0[\psi_b(u)(e^{\ell(u)}-1)^2]\right)^{1/2}
\notag\\
&=
\left(E_0[\psi_b(u)r_j(u)^2]\right)^{1/2}\sqrt{V_b}
\notag\\
&\le
\delta_h(j)\sqrt{V_b},
\label{eq:DeltaE-rj-bound}
\end{align}
where the last inequality uses Assumption~\ref{ass:sieve-error}.

Combining \eqref{eq:DeltaE-Wj-linear} and \eqref{eq:DeltaE-rj-bound}, we obtain
\begin{equation}
\left|
\Delta E[\psi_b(u)W_j(\gamma)]
-
(\gamma_1-\gamma)\Delta E[\psi_b(u)\varepsilon_2]
\right|
\le
\delta_h(j)\sqrt{V_b}.
\label{eq:key-ineq-gamma}
\end{equation}

Setting $\gamma=\gamma_1$ in \eqref{eq:key-ineq-gamma} gives
\[
\left|\Delta E[\psi_b(u)W_j(\gamma_1)]\right|
\le
\delta_h(j)\sqrt{V_b},
\]
which is exactly $\gamma_1\in\Gamma_j$ by \eqref{eq:Gammaj-def}.

Take any $\gamma\in\Gamma_j$. By definition \eqref{eq:Gammaj-def},
$\left|\Delta E[\psi_b(u)W_j(\gamma)]\right|\le \delta_h(j)\sqrt{V_b}$.
Plugging this into \eqref{eq:key-ineq-gamma} and applying the triangle inequality,
\[
\left|(\gamma_1-\gamma)\Delta E[\psi_b(u)\varepsilon_2]\right|
\le
\left|\Delta E[\psi_b(u)W_j(\gamma)]\right|+\delta_h(j)\sqrt{V_b}
\le
2\delta_h(j)\sqrt{V_b}.
\]
Under the local relevance condition \eqref{eq:local-relevance}, we can divide to obtain
\[
|\gamma-\gamma_1|
\le
\frac{2\,\delta_h(j)\sqrt{V_b}}{|\Delta E[\psi_b(u)\varepsilon_2]|}.
\]
This implies \eqref{eq:Gammaj-diam}.

If $\delta_h(j)\to 0$, then \eqref{eq:Gammaj-diam} implies
$\mathrm{diam}(\Gamma_j)\to 0$. Since $\gamma_1\in\Gamma_j$ for all $j$, it follows that
$\Gamma_j$ converges to the singleton $\{\gamma_1\}$.
\end{proof}

\begin{remark}
Condition \eqref{eq:local-relevance} is a local relevance requirement.
In practice, one may consider a grid of $(b,e_0)$ values and intersect the
resulting sets; if \eqref{eq:local-relevance} fails at some $(b,e_0)$, those
choices contribute no identifying power but do not invalidate the procedure.
\end{remark}

\section{Implementation and Inference by Test Inversion}
\label{sec:implementation}

This section describes how to operationalize the moment inequalities using sample analogues and how to construct confidence sets for
$(\beta_1,\gamma_1)$ by test inversion.  We separate (i) construction of
generated regressors and residualized objects, (ii) estimation and calibration
of nuisance quantities in the bounds, and (iii) the inversion procedure.

Throughout, let $\theta:=(\beta_1,\gamma_1)$ denote the structural parameters of
interest. The implementation depends on tuning parameters
\[
(j,b,e_0)\in\mathcal J\times\mathcal B\times\mathcal E,
\]
where $j$ is the sieve dimension, $b$ is the localization bandwidth, and $e_0$ is
the localization point. The final confidence set will intersect across a
user-chosen grid of $(j,b,e_0)$ values.

\subsection{Generated first-stage residuals and FWL residualization}

Estimate \eqref{eq:firststage} by OLS to obtain $\hat\beta_2$ and define the
generated first-stage residual
\begin{equation}
\hat\varepsilon_{2i}
:=
Y_{2i}-X_i'\hat\beta_2,
\qquad
\hat u_i
:=
\hat\varepsilon_{2i}-e_0.
\label{eq:u-hat}
\end{equation}
Under Assumptions Assumption~\ref{ass:baseline}.\ref{ass:exoX}--Assumption~\ref{ass:baseline}.\ref{ass:pd}, $\hat\beta_2\xrightarrow{p}\beta_2$ and
$\hat u_i$ is a consistent proxy for $u_i=\varepsilon_{2i}-e_0$.

Let $\hat M_X$ denote the sample residual-maker onto the orthogonal complement
of the columns of $X$. Define
\begin{equation}
\tilde Y_{1i}:=\hat M_X Y_{1i},\qquad
\tilde Y_{2i}:=\hat M_X Y_{2i}.
\label{eq:tildeY-sample}
\end{equation}
By the sample FWL theorem, $\tilde Y_{2i}$ is the residual of $Y_{2i}$ after
partialling out $X_i$ and similarly for $\tilde Y_{1i}$.

\medskip

In the population, $\tilde Y_2=\varepsilon_2$ and
$\tilde Y_1=\gamma_1\varepsilon_2+h(\varepsilon_2)+\eta$.
In implementation, we treat $\tilde Y_{2i}$ as a proxy for $\varepsilon_{2i}$,
and use $\hat u_i$ as a proxy for $u_i=\varepsilon_{2i}-e_0$.

\subsection{Sieve construction and the sieve-adjusted residual}

Fix $(j,b,e_0)$. Let $\{p_k(\cdot)\}_{k=1}^j$ be a chosen basis
(e.g., spline or polynomial) and define the sieve class
\[
\mathcal H_j=\Big\{\sum_{k=1}^j \alpha_k p_k(\cdot)\Big\}.
\]
We approximate $h$ by $h_j\in\mathcal H_j$ using only the $Z_0=0$ subsample and
local weight $\psi_b(\cdot)$. Let
\begin{equation}
\psi_b(t):=K(t/b)\mathbf 1\{|t|\le b\},
\label{eq:psi-def}
\end{equation}
where $K$ is a bounded kernel.

For a given candidate $\gamma_1$ (hence candidate $\theta$),
define the pseudo-outcome
\[
R_{1i}(\gamma_1):=\tilde Y_{1i}-\gamma_1 \tilde Y_{2i}.
\]
Estimate $\alpha(\gamma_1)\in\mathbb{R}^j$ by weighted least squares on the
$Z_0=0$ subsample:
\begin{equation}
\hat\alpha(\gamma_1)
:=
\arg\min_{\alpha\in\mathbb{R}^j}
\sum_{i:Z_{0i}=0}
\psi_b(\hat u_i)\Big(R_{1i}(\gamma_1)-\sum_{k=1}^j \alpha_k p_k(\hat u_i)\Big)^2.
\label{eq:alpha-hat}
\end{equation}
Set the fitted sieve function
\[
\hat h_j(\cdot;\gamma_1):=\sum_{k=1}^j \hat\alpha_k(\gamma_1)\,p_k(\cdot).
\]

Define the sample analogue of $\tilde W_j$ as
\begin{equation}
\hat W_{j,i}(\theta)
:=
\tilde Y_{1i}
-
\gamma_1 \tilde Y_{2i}
-
\hat h_j(\hat u_i;\gamma_1).
\label{eq:W-hat}
\end{equation}
This corresponds to the population object
$\tilde W_j=\eta+r_j(u)$.

To mitigate overfitting bias when $j$ is moderately large, one may employ
$K$-fold cross-fitting:
estimate $\hat h_j$ on the training folds within $Z_0=0$ and evaluate
$\hat W_{j,i}$ on the held-out fold, then aggregate across folds.
All subsequent steps can be carried out using the cross-fitted residuals.

\subsection{Estimating the density-ratio nuisance}

The moment bound involves
\[
V_b
=
E_0\!\left[\psi_b(u)\big(e^{\ell(u)}-1\big)^2\right],
\qquad
\ell(u)=\log\{f_1(u)/f_0(u)\}.
\]
We propose a low-dimensional working model for the \emph{local} log-density ratio
based on Assumption~\ref{ass:local-tilting}.

Fix $(b,e_0)$. On the local window $\{i:|\hat u_i|\le b\}$, estimate
$(a,\lambda)$ by fitting the exponential-tilting density ratio
\begin{equation}
\ell(u)\approx a+\lambda u^2
\label{eq:tilt-working}
\end{equation}
between the empirical distributions of $\hat u$ under $Z_0=1$ and $Z_0=0$.
One implementation is \emph{local logistic regression}:
regress $Z_0$ on $1$ and $\hat u^2$ using only observations with
$|\hat u|\le b$ and optionally kernel weights $K(\hat u/b)$.
Let $(\hat a,\hat\lambda)$ denote the resulting coefficients.

Optionally, enforce the normalization $E_0[e^{\ell(u)}]=1$ locally by replacing
$\hat a$ with
\begin{equation}
\hat a^{\,\mathrm{norm}}
:=
-\log\Bigg\{
\frac{1}{n_0}\sum_{i:Z_{0i}=0}\psi_b(\hat u_i)\exp(\hat\lambda \hat u_i^2)
\Bigg\},
\label{eq:a-norm}
\end{equation}
and set $\hat\ell(\hat u_i):=\hat a^{\,\mathrm{norm}}+\hat\lambda \hat u_i^2$.
This corresponds to the exact-tilting normalization when the remainder is zero
and provides a stabilized plug-in in finite samples.

Define
\begin{equation}
\widehat V_b
:=
\frac{1}{n_0}\sum_{i:Z_{0i}=0}
\psi_b(\hat u_i)\big(\exp\{\hat\ell(\hat u_i)\}-1\big)^2.
\label{eq:Vb-hat}
\end{equation}

Under Assumption~\ref{ass:local-tilting}, the remainder $r(\cdot)$ is bounded by
$\delta_b$ on $\mathcal U_b$. In practice, we treat $\delta_b$ as a sensitivity
parameter and report results over a grid $\delta_b\in\mathcal D$.
When one is willing to impose $\delta_b=0$ (exact local tilting), the procedure
simplifies by setting $r\equiv 0$.

\subsection{Calibration of the sieve approximation error}

The bound also depends on $\delta_h(j)$ in Assumption~\ref{ass:sieve-error}. 
To calibrate $\delta_h$, a possible thought is to use a function-class. Specifically, 
assume $h$ belongs to a Sobolev-type ball $\mathcal H(s,R)$ with smoothness
$s>0$ and radius $R>0$. For standard spline/polynomial sieves,
\begin{equation}
\delta_h(j)\le C(s,b)\,R\,j^{-s}.
\label{eq:delta-h-sobolev}
\end{equation}
We treat $(s,R)$ as sensitivity parameters and report results over a grid.

Alternatively, we can use data-driven approach for the calibration. However, it may usually generate a conservative bound. 
Define a conservative bound using cross-fitted prediction error:
\begin{equation}
\widehat{\delta}_h(j)
:=
\Bigg(
\frac{1}{n_0}\sum_{i:Z_{0i}=0}
\psi_b(\hat u_i)\,\hat W_{j,i}(\theta)^2
\Bigg)^{1/2}.
\label{eq:delta-h-hat}
\end{equation}
Since $\hat W_{j,i}(\theta)\approx \eta_i+r_j(u_i)$, this absorbs the noise
variance and hence overestimates the approximation error component, yielding a
conservative procedure.

\subsection{Test inversion and confidence sets}

For fixed $(j,b,e_0)$ and candidate $\theta=(\beta_1,\gamma_1)$, define the sample
analogue of $\Delta E[\psi_b \tilde W_j]$ by
\begin{equation}
\widehat{\Delta E}_n(\theta)
:=
\frac{1}{n_1}\sum_{i:Z_{0i}=1}
\psi_b(\hat u_i)\,\hat W_{j,i}(\theta)
-
\frac{1}{n_0}\sum_{i:Z_{0i}=0}
\psi_b(\hat u_i)\,\hat W_{j,i}(\theta),
\label{eq:DeltaE-hat}
\end{equation}
where $n_d=\sum_{i=1}^n\mathbf 1\{Z_{0i}=d\}$.

Let $\widehat V_b$ be as in \eqref{eq:Vb-hat} and let $\delta_h(j)$ denote a
chosen bound (either \eqref{eq:delta-h-sobolev} or \eqref{eq:delta-h-hat}).
The population restriction \eqref{eq:main-bound} motivates the sample analogue
\begin{equation}
\left|
\widehat{\Delta E}_n(\theta)
\right|
\le
\delta_h(j)\sqrt{\widehat V_b}
+
\tau_{n,j,b},
\label{eq:sample-ineq}
\end{equation}
where $\tau_{n,j,b}\ge 0$ is a (small) statistical tolerance accounting for
sampling uncertainty and generated-regressor effects.\footnote{In practice
$\tau_{n,j,b}$ can be constructed from an asymptotic standard error or from a
multiplier bootstrap. We provide details below.}

Equivalently, \eqref{eq:sample-ineq} is the pair of one-sided inequalities
\begin{align}
\widehat{\Delta E}_n(\theta)-\delta_h(j)\sqrt{\widehat V_b}-\tau_{n,j,b}
&\le 0,
\label{eq:sample-ineq-1}\\
-\widehat{\Delta E}_n(\theta)-\delta_h(j)\sqrt{\widehat V_b}-\tau_{n,j,b}
&\le 0.
\label{eq:sample-ineq-2}
\end{align}

We construct confidence sets for $\theta$ by inverting a test of the null that
the moment inequalities \eqref{eq:sample-ineq-1}--\eqref{eq:sample-ineq-2}
hold for a given candidate $\theta$.

Define the moment vector
\[
\hat m_{n}(\theta)
:=
\begin{pmatrix}
\widehat{\Delta E}_n(\theta)-\delta_h(j)\sqrt{\widehat V_b}\\
-\widehat{\Delta E}_n(\theta)-\delta_h(j)\sqrt{\widehat V_b}
\end{pmatrix}.
\]

A simple choice is the maximum (intersection-union) statistic
\begin{equation}
T_{n}(\theta)
:=
\sqrt{n}\,
\max\big\{
\hat m_{n,1}(\theta),\,
\hat m_{n,2}(\theta),\,
0
\big\}.
\label{eq:test-stat}
\end{equation}
Large values of $T_n(\theta)$ indicate violation of at least one inequality.

Let $\xi_i$ be i.i.d.\ standard normal multipliers independent of the data.
Let $\hat\varphi_i(\theta)$ denote an influence-function-type score for
$\widehat{\Delta E}_n(\theta)$ (constructed from the sample analogues in
\eqref{eq:DeltaE-hat} with cross-fitting).
Define the bootstrap analogue
\[
\widehat{\Delta E}_n^*(\theta)
:=
\frac{1}{\sqrt{n}}\sum_{i=1}^n \xi_i \hat\varphi_i(\theta).
\]
Let
\[
T_n^*(\theta)
:=
\max\{\widehat{\Delta E}_n^*(\theta),-\widehat{\Delta E}_n^*(\theta),0\}.
\]
Let $c_{1-\alpha_G}(\theta)$ be the $(1-\alpha_G)$ quantile of $T_n^*(\theta)$ conditional on the data, where $\alpha_G = \alpha / | \mathcal G|$, and $\mathcal G$ is a user-specified finite grid.

Define
\begin{equation}
\mathcal C_{1-\alpha_G}(j,b,e_0)
:=
\left\{
\theta:
T_n(\theta)\le c_{1-\alpha_G}(\theta)
\right\}.
\label{eq:conf-fixed}
\end{equation}

Finally, define the reported confidence set as the intersection across a grid of
tuning parameters:
\begin{equation}
\mathcal C_{1-\alpha}
:=
\bigcap_{(j,b,e_0)\in\mathcal G}
\mathcal C_{1-\alpha_G}(j,b,e_0),
\label{eq:conf-intersection}
\end{equation}
Intersecting across $(j,b,e_0)$ strengthens robustness and reflects the
multi-scale nature of the identifying restrictions.

We now formalize the asymptotic validity of the test inversion procedure and the
resulting confidence set. The key difficulty is that the moment inequalities are
constructed from (i) generated first-stage residuals, (ii) a sieve regression
estimated on the $Z_0=0$ subsample, and (iii) a locally estimated density-ratio
nuisance. We use cross-fitting to mitigate overfitting bias and employ a
multiplier bootstrap to calibrate the critical value.

\begin{assumption}[Regularity for inference]\label{ass:inference}
The following conditions hold.
\begin{enumerate}
    \item\label{ass:inf-iid}
    $\{(Y_{1i},Y_{2i},X_i,Z_{0i})\}_{i=1}^n$ are i.i.d., with
    $P(Z_0=1)\in(0,1)$ and $E\|X\|^{2+\epsilon}<\infty$ for some $\epsilon>0$.
    
    \item\label{ass:inf-nondeg}
    For each fixed $(j,b,e_0)\in\mathcal G$ and each candidate $\theta$ in a compact
    parameter space $\Theta$, the variance of the influence function for
    $\widehat{\Delta E}_n(\theta)$ is uniformly bounded away from zero and infinity.
    
    \item\label{ass:inf-cf}
    Cross-fitting is implemented so that, conditional on the training folds,
    the evaluation-fold residuals $\hat W_{j,i}(\theta)$ satisfy a uniform
    Neyman-orthogonality / debiasing property:
    the first-order impact of nuisance estimation errors on
    $\widehat{\Delta E}_n(\theta)$ is $o_p(n^{-1/2})$ uniformly in
    $\theta\in\Theta$ and $(j,b,e_0)\in\mathcal G$.
    
    \item\label{ass:inf-rate}
    The nuisance estimators used to construct $\hat W_{j,i}(\theta)$ and
    $\hat\ell(\hat u_i)$ achieve rates sufficient for
    $\sup_{\theta\in\Theta,(j,b,e_0)\in\mathcal G}
    \big|\widehat{\Delta E}_n(\theta)-\Delta E[\psi_b(u)\tilde W_j(\theta)]\big|
    =O_p(n^{-1/2})$.
    Moreover, $\widehat V_b \xrightarrow{p} V_b$ for each $(b,e_0)\in\mathcal G$.
    
    \item\label{ass:inf-bootstrap}
    The multiplier bootstrap statistic $T_n^*(\theta)$ consistently estimates the
    conditional law of $T_n(\theta)$ under the null uniformly over
    $\theta\in\Theta$ and $(j,b,e_0)\in\mathcal G$.
\end{enumerate}
\end{assumption}

\begin{theorem}[Asymptotic validity and coverage]\label{thm:main-inference}
Let $\mathcal G$ be a \emph{finite} grid of tuning parameters $(j,b,e_0)$.
Define $\alpha_G:=\alpha/|\mathcal G|$.
For each $(j,b,e_0)\in\mathcal G$, construct $\mathcal C_{1-\alpha_G}(j,b,e_0)$
as in \eqref{eq:conf-fixed} but using the $(1-\alpha_G)$ bootstrap quantile
$c_{1-\alpha_G}(\theta)$.
Let
\[
\mathcal C_{1-\alpha}
:=
\bigcap_{(j,b,e_0)\in\mathcal G}\mathcal C_{1-\alpha_G}(j,b,e_0).
\]
Under Assumptions~\ref{ass:baseline}, \ref{ass:local-tilting},
\ref{ass:sieve-error}, and \ref{ass:inference}, the confidence set
$\mathcal C_{1-\alpha}$ satisfies the asymptotic coverage guarantee
\[
\liminf_{n\to\infty}P\big(\theta_0\in\mathcal C_{1-\alpha}\big)\ \ge\ 1-\alpha,
\]
where $\theta_0=(\beta_{1,0},\gamma_{1,0})$ is the true structural parameter.
\end{theorem}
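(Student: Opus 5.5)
The plan is a Bonferroni union bound over the finite grid, combined with a per-grid-point coverage argument. First, by the union bound,
\[
P(\theta_0\notin\mathcal C_{1-\alpha})\le\sum_{(j,b,e_0)\in\mathcal G}P\big(\theta_0\notin\mathcal C_{1-\alpha_G}(j,b,e_0)\big).
\]
It therefore suffices to show that for each fixed $(j,b,e_0)$,
\[
\limsup_n P\big(T_n(\theta_0)>c_{1-\alpha_G}(\theta_0)\big)\le\alpha_G .
\]
Summing $|\mathcal G|$ such terms then gives $\limsup_n P(\theta_0\notin\mathcal C_{1-\alpha})\le\alpha$. No uniformity in $\theta$ is needed here, since coverage only involves the true value $\theta_0$; the uniform parts of Assumption~\ref{ass:inference} are more than we need.

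Fix a grid point and evaluate at $\theta_0$. Corollary~\ref{cor:point-id-gamma}(i), or more directly \eqref{eq:main-bound}, applies because Assumptions~\ref{ass:baseline} and \ref{ass:sieve-error} hold. It gives $|\mu|\le\delta_h(j)\sqrt{V_b}$, where $\mu:=\Delta E[\psi_b(u)\tilde W_j(\theta_0)]$. Hence both population moments $\mu-\delta_h\sqrt{V_b}$ and $-\mu-\delta_h\sqrt{V_b}$ are $\le0$.

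Next I would write
\[
\sqrt n\,\hat m_{n,1}(\theta_0)=\sqrt n(\widehat{\Delta E}_n-\mu)+\sqrt n(\mu-\delta_h\sqrt{V_b})-\sqrt n\,\delta_h(\sqrt{\widehat V_b}-\sqrt{V_b}),
\]
and similarly for $\hat m_{n,2}$. The middle term is nonpositive, so it can only help. The third term requires $\sqrt n(\widehat V_b-V_b)=O_p(1)$, or else its sign must be controlled; I would note that a $\sqrt n$-rate for $\widehat V_b$ is implicit in the influence-function representation. Alternatively, one can absorb this term into $\hat\varphi_i$, treating $\widehat V_b$ as one of the nuisance estimators covered by Assumption~\ref{ass:inference}.\ref{ass:inf-cf}–\ref{ass:inf-rate}.

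By Assumption~\ref{ass:inference}.\ref{ass:inf-cf}, the nuisance effects (generated $\hat u_i$, cross-fitted $\hat h_j$, $\hat\ell$) are $o_p(1)$ after $\sqrt n$ scaling. The leading term is then $G_n:=n^{-1/2}\sum_i\varphi_i(\theta_0)+o_p(1)$, which is asymptotically normal with nondegenerate variance by Assumption~\ref{ass:inference}.\ref{ass:inf-iid}–\ref{ass:inf-nondeg}. This yields the stochastic dominance
\[
T_n(\theta_0)\le\max\{G_n,-G_n,0\}+o_p(1)=|G_n|+o_p(1).
\]

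Finally, Assumption~\ref{ass:inference}.\ref{ass:inf-bootstrap} says that the conditional law of $T_n^*(\theta_0)=\max\{\widehat{\Delta E}^*_n,-\widehat{\Delta E}_n^*,0\}$ consistently estimates the law of $|G_n|$, the least-favorable (binding) case. Because the limit $|N(0,\sigma^2)|$ has a continuous distribution function at its positive quantiles ($\sigma>0$ by nondegeneracy), $c_{1-\alpha_G}(\theta_0)$ converges in probability to the $(1-\alpha_G)$ quantile $q$ of $|N(0,\sigma^2)|$. A standard Slutsky/Pólya argument then gives
\[
\limsup_n P\big(T_n(\theta_0)>c_{1-\alpha_G}\big)\le P\big(|N(0,\sigma^2)|>q\big)=\alpha_G .
\]

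The main obstacle is the handling of $\widehat V_b$ and the calibrated $\delta_h(j)$. These enter the centering, but the bootstrap score $\hat\varphi_i$ is stated only for $\widehat{\Delta E}_n$. If $\delta_h$ is data-driven via \eqref{eq:delta-h-hat}, I would argue that it is conservative with probability approaching one, i.e.\ $\widehat\delta_h(j)\ge\delta_h(j)$ asymptotically. That makes the bound inequality only slacker. For $\widehat V_b$, I would first try to include its estimation error in $\hat\varphi_i$; failing that, I would invoke the rate condition and state explicitly that $\sqrt n(\sqrt{\widehat V_b}-\sqrt{V_b})=o_p(1)$, or $O_p(1)$ absorbed into the score, is being used.
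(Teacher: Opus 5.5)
Your proposal is correct and follows essentially the same route as the paper. The population bound \eqref{eq:main-bound} at $\theta_0$ makes both moments nonpositive, $T_n(\theta_0)$ is dominated by the binding-case statistic $\max\{G_n,-G_n,0\}$, the multiplier bootstrap calibrates its $(1-\alpha_G)$ quantile, and a Bonferroni union bound over the finite grid $\mathcal G$ finishes the argument.

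Your explicit handling of $\sqrt n\,\delta_h(j)\bigl(\sqrt{\widehat V_b}-\sqrt{V_b}\bigr)$ is more careful than the paper's step \eqref{eq:moment-expansion}, which claims this term is $o_p(1)$ from $\widehat V_b\xrightarrow{p}V_b$ alone. The condition you fall back on, $\sqrt n(\sqrt{\widehat V_b}-\sqrt{V_b})=o_p(1)$ (or a one-sided version), is what actually closes that step. Your first idea of folding this term into the scalar score $\hat\varphi_i$ would not fit the symmetric bootstrap $T_n^*$: the $\widehat V_b$ term enters both moments with the same sign, while $\widehat{\Delta E}_n$ enters them with opposite signs.
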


\begin{proof}
Fix any $(j,b,e_0)\in\mathcal G$.
Write the corresponding population inequality in \eqref{eq:main-bound} as
\begin{equation}\label{eq:null-fixed}
\big|\Delta E[\psi_b(u)\tilde W_j(\theta_0)]\big|\ \le\ \delta_h(j)\sqrt{V_b},
\end{equation}
which holds by Assumptions~\ref{ass:baseline}.\ref{ass:cf} and \ref{ass:sieve-error}
together with the derivation leading to \eqref{eq:main-bound}.

Define the two population moments
\[
m_1(\theta):=\Delta E[\psi_b(u)\tilde W_j(\theta)]-\delta_h(j)\sqrt{V_b},\qquad
m_2(\theta):=-\Delta E[\psi_b(u)\tilde W_j(\theta)]-\delta_h(j)\sqrt{V_b}.
\]
Then \eqref{eq:null-fixed} is equivalent to $m_1(\theta_0)\le 0$ and
$m_2(\theta_0)\le 0$. Hence,
\begin{equation}\label{eq:pop-max-null}
\max\{m_1(\theta_0),m_2(\theta_0),0\}=0.
\end{equation}

Recall the sample moments used in $T_n(\theta)$:
\[
\hat m_{n,1}(\theta)
=
\widehat{\Delta E}_n(\theta)-\delta_h(j)\sqrt{\widehat V_b},\qquad
\hat m_{n,2}(\theta)
=
-\widehat{\Delta E}_n(\theta)-\delta_h(j)\sqrt{\widehat V_b}.
\]
By Assumption~\ref{ass:inference}.\ref{ass:inf-rate} and $\widehat V_b\xrightarrow{p}V_b$,
\begin{equation}\label{eq:moment-expansion}
\sqrt{n}\Big(\hat m_{n,\ell}(\theta_0)-m_\ell(\theta_0)\Big)
=
\sqrt{n}\Big(\widehat{\Delta E}_n(\theta_0)-\Delta E[\psi_b(u)\tilde W_j(\theta_0)]\Big)
+o_p(1),
\qquad \ell=1,2.
\end{equation}
Moreover, Assumptions~\ref{ass:inference}.\ref{ass:inf-cf}--\ref{ass:inf-rate}
imply that the leading term on the right admits an asymptotically linear representation:
there exist i.i.d.\ mean-zero variables $\{\varphi_i(\theta_0)\}$ with finite variance such that
\begin{equation}\label{eq:asylin}
\sqrt{n}\Big(\widehat{\Delta E}_n(\theta_0)-\Delta E[\psi_b(u)\tilde W_j(\theta_0)]\Big)
=
\frac{1}{\sqrt{n}}\sum_{i=1}^n \varphi_i(\theta_0)+o_p(1).
\end{equation}
Combining \eqref{eq:moment-expansion} and \eqref{eq:asylin} yields, for $\ell=1,2$,
\begin{equation}\label{eq:moment-asylin}
\sqrt{n}\Big(\hat m_{n,\ell}(\theta_0)-m_\ell(\theta_0)\Big)
=
\frac{1}{\sqrt{n}}\sum_{i=1}^n \varphi_i(\theta_0)+o_p(1).
\end{equation}

By definition,
\[
T_n(\theta_0)
=
\sqrt{n}\max\{\hat m_{n,1}(\theta_0),\hat m_{n,2}(\theta_0),0\}.
\]
Using $m_\ell(\theta_0)\le 0$ and \eqref{eq:moment-asylin},
\[
T_n(\theta_0)
=
\max\Big\{
\sqrt{n}\big(\hat m_{n,1}(\theta_0)-m_1(\theta_0)\big)+\sqrt{n}m_1(\theta_0),\ 
\sqrt{n}\big(\hat m_{n,2}(\theta_0)-m_2(\theta_0)\big)+\sqrt{n}m_2(\theta_0),\
0
\Big\}.
\]
Since $\sqrt{n}m_\ell(\theta_0)\le 0$, this implies the stochastic domination
\begin{equation}\label{eq:domination}
T_n(\theta_0)
\le
\max\Big\{
\sqrt{n}\big(\hat m_{n,1}(\theta_0)-m_1(\theta_0)\big),\
\sqrt{n}\big(\hat m_{n,2}(\theta_0)-m_2(\theta_0)\big),\
0
\Big\}.
\end{equation}
By \eqref{eq:moment-asylin}, the right-hand side equals
\[
\max\Big\{
\frac{1}{\sqrt{n}}\sum_{i=1}^n \varphi_i(\theta_0),\
-\frac{1}{\sqrt{n}}\sum_{i=1}^n \varphi_i(\theta_0),\
0
\Big\}
+o_p(1),
\]
which is the max of a centered asymptotically normal statistic and its negative.
Assumption~\ref{ass:inference}.\ref{ass:inf-nondeg} ensures non-degeneracy.

Let $\{\xi_i\}_{i=1}^n$ be i.i.d.\ $N(0,1)$ multipliers independent of the data,
and let $\hat\varphi_i(\theta_0)$ be the cross-fitted estimate of $\varphi_i(\theta_0)$.
Define
\[
\widehat{\Delta E}_n^*(\theta_0)
:=
\frac{1}{\sqrt{n}}\sum_{i=1}^n \xi_i \hat\varphi_i(\theta_0),
\qquad
T_n^*(\theta_0)
:=
\max\{\widehat{\Delta E}_n^*(\theta_0),-\widehat{\Delta E}_n^*(\theta_0),0\}.
\]
By Assumption~\ref{ass:inference}.\ref{ass:inf-bootstrap},
the conditional distribution of $T_n^*(\theta_0)$ given the data consistently
approximates the distribution of the dominating statistic in
\eqref{eq:domination}. Hence, if $c_{1-\alpha_G}(\theta_0)$ denotes the
conditional $(1-\alpha_G)$ quantile of $T_n^*(\theta_0)$, we have
\begin{equation}\label{eq:size-control-fixed}
\limsup_{n\to\infty} P\big(T_n(\theta_0)>c_{1-\alpha_G}(\theta_0)\big)
\le \alpha_G.
\end{equation}

By definition,
$\theta_0\notin\mathcal C_{1-\alpha_G}(j,b,e_0)$ iff
$T_n(\theta_0)>c_{1-\alpha_G}(\theta_0)$.
Therefore \eqref{eq:size-control-fixed} implies
\begin{equation}\label{eq:coverage-fixed}
\liminf_{n\to\infty}P\big(\theta_0\in\mathcal C_{1-\alpha_G}(j,b,e_0)\big)
\ge 1-\alpha_G.
\end{equation}

Since $\mathcal G$ is finite, by the union bound,
\begin{align*}
P\big(\theta_0\notin \mathcal C_{1-\alpha}\big)
&=
P\Big(\theta_0\notin \bigcap_{(j,b,e_0)\in\mathcal G}\mathcal C_{1-\alpha_G}(j,b,e_0)\Big)
=
P\Big(\bigcup_{(j,b,e_0)\in\mathcal G}\{\theta_0\notin \mathcal C_{1-\alpha_G}(j,b,e_0)\}\Big)\\
&\le
\sum_{(j,b,e_0)\in\mathcal G}
P\big(\theta_0\notin \mathcal C_{1-\alpha_G}(j,b,e_0)\big).
\end{align*}
Taking $\limsup_{n\to\infty}$ on both sides and using \eqref{eq:coverage-fixed},
\[
\limsup_{n\to\infty}P\big(\theta_0\notin \mathcal C_{1-\alpha}\big)
\le
\sum_{(j,b,e_0)\in\mathcal G} \alpha_G
=
|\mathcal G|\cdot \frac{\alpha}{|\mathcal G|}
=
\alpha,
\]
which is equivalent to
$\liminf_{n\to\infty}P(\theta_0\in\mathcal C_{1-\alpha})\ge 1-\alpha$.
This proves the theorem.
\end{proof}

\begin{remark}
    In applications, $(j,b)$ govern a bias--variance tradeoff: larger $j$ reduces
sieve bias $\delta_h(j)$, while smaller $b$ localizes the density-ratio
approximation but reduces effective sample size. The localization point $e_0$
can be set on a grid spanning the support of $\hat\varepsilon_2$.
\end{remark}
\begin{remark}
The procedure profiles over nuisance parameters used to construct
$\hat h_j$ and $\hat\ell$. In particular, $\hat h_j$ depends on the candidate
$\gamma_1$ through \eqref{eq:alpha-hat}, and $\hat\ell$ depends on $(b,e_0)$.
\end{remark}


\begin{proposition}[Identification of $\beta_1$ given $\gamma_1$]
\label{prop:beta1-id}
Suppose Assumption~\ref{ass:baseline} holds and $\gamma_1$ is known.
Then $\beta_1$ is point identified and satisfies
\begin{equation}
\beta_1
=
\Sigma_{XX}^{-1}
E\!\left[
X\big(Y_1-\gamma_1 Y_2\big)
\right],
\qquad
\Sigma_{XX}:=E[XX'].
\label{eq:beta1-id}
\end{equation}
\end{proposition}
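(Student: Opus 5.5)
The plan is to move the known term $\gamma_1 Y_2$ to the left-hand side of the structural equation \eqref{eq:secondstage} and then treat what remains as a linear regression of $Y_1-\gamma_1Y_2$ on $X$ with an exogenous error. Concretely, I will write
\[
Y_1-\gamma_1Y_2 = X'\beta_1+\varepsilon_1 .
\]
Since $\gamma_1$ is assumed known, the left-hand side is an observable random variable, so the only unknown left is $\beta_1$.

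The first step is to premultiply this equation by $X$ and take expectations:
\[
E[X(Y_1-\gamma_1Y_2)] = \Sigma_{XX}\beta_1 + E[X\varepsilon_1].
\]
Both $E[X(Y_1-\gamma_1Y_2)]$ and $\Sigma_{XX}$ must be finite for this to be legitimate. For $\Sigma_{XX}$ this follows from $E\|X\|^2<\infty$ in Assumption~\ref{ass:baseline}.\ref{ass:reg}. For the cross moment, it suffices that $E[X\varepsilon_1]$ is finite, because then $E[XX']\beta_1+E[X\varepsilon_1]$ is finite and equals the left-hand side.

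The second step is to show that $E[X\varepsilon_1]=0$. By the law of iterated expectations, $E[X\varepsilon_1]=E\{X\,E[\varepsilon_1\mid X]\}=0$, using Assumption~\ref{ass:baseline}.\ref{ass:exoX}. Here one has to check that $\varepsilon_1$ is integrable enough for this to make sense. I will use the decomposition \eqref{eq:control-decomp}, $\varepsilon_1=h(\varepsilon_2)+\eta$, so that $E|X\varepsilon_1|\le E|Xh(\varepsilon_2)|+E|X\eta|$. The term $E|X\eta|$ is finite by Cauchy--Schwarz together with $E\|X\|^2<\infty$ and $E\eta^2<\infty$.

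The term $E|Xh(\varepsilon_2)|$ is where I expect the main obstacle. The stated moment conditions do not directly bound $E[h(\varepsilon_2)^2]$. I will handle this by noting that Assumption~\ref{ass:baseline}.\ref{ass:exoX} presupposes that $E[\varepsilon_1\mid X]$ exists and that the model is a second-moment one. Accordingly, I will state that $E[\varepsilon_1^2]<\infty$ is implicit; equivalently, $E[X\varepsilon_1]$ exists, which is what is needed for $\beta_1$ to be defined as a population projection coefficient. With that in hand, Cauchy--Schwarz gives $E|X\varepsilon_1|<\infty$.

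The final step is to invert. Assumption~\ref{ass:baseline}.\ref{ass:pd} makes $\Sigma_{XX}$ invertible, so $\beta_1=\Sigma_{XX}^{-1}E[X(Y_1-\gamma_1Y_2)]$, which is \eqref{eq:beta1-id}. The right-hand side depends only on the joint distribution of observables and the known $\gamma_1$, so $\beta_1$ is point identified. I will add one remark: any other candidate $b$ satisfying the same moment equation, $E[X(Y_1-\gamma_1Y_2-X'b)]=0$, must obey $\Sigma_{XX}(b-\beta_1)=0$, hence $b=\beta_1$. This makes the uniqueness explicit.
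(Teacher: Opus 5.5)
Your proof is correct and has the same skeleton as the paper's: premultiply by $X$, take expectations, show the error cross-moment is zero, and invert $\Sigma_{XX}$. The difference is in how the cross-moment is killed. The paper substitutes $\varepsilon_1=h(\varepsilon_2)+\eta$ and asserts $E[Xh(\varepsilon_2)]=0$ because $E[\varepsilon_2\mid X]=0$ and $h(\varepsilon_2)$ is a function of $\varepsilon_2$. Those two facts alone do not imply it, since mean independence of $\varepsilon_2$ says nothing about, for instance, $E[\varepsilon_2^2\mid X]$. What actually delivers it is $E[h(\varepsilon_2)\mid X]=E[\varepsilon_1\mid X]-E[\eta\mid X]=0$. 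This is the argument in the paper's FWL footnote, and it uses $E[\varepsilon_1\mid X]=0$ together with Assumption~\ref{ass:baseline}.\ref{ass:cf}. You apply $E[\varepsilon_1\mid X]=0$ from Assumption~\ref{ass:baseline}.\ref{ass:exoX} directly. Your route is therefore shorter, needs neither the control-function decomposition nor Assumption~\ref{ass:baseline}.\ref{ass:cf} for the orthogonality step, and avoids the paper's faulty justification.

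Your integrability worry is well founded, and the paper does not address it: the stated assumptions do not guarantee $E|X\varepsilon_1|<\infty$. Take the following example:
\begin{itemize}
\item scalar $X$ with $EX^2<\infty=E|X|^3$;
\item $\varepsilon_2=Xw$, with $w$ a Rademacher sign independent of $X$;
\item $h(e)=e|e|$ and $\eta=0$.
\end{itemize}
Every part of Assumption~\ref{ass:baseline} holds. Yet $X\varepsilon_1=|X|^3w$, so both the positive and negative parts of $X(Y_1-\gamma_1Y_2)=X^2\beta_1+|X|^3w$ have infinite expectation, and $E[X(Y_1-\gamma_1Y_2)]$ is undefined. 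Hence $E[\varepsilon_1^2]<\infty$ (or simply $E|X\varepsilon_1|<\infty$) is not ``implicit'' in the model. It is an additional condition the proposition genuinely needs, and the paper's proof uses it silently when it treats $E[Xh(\varepsilon_2)]$ as well defined. State it as an explicit maintained assumption rather than as something the model presupposes; with it, your argument is complete.
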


\begin{proof}
Substituting \eqref{eq:control-decomp} into \eqref{eq:secondstage},
\[
Y_1-\gamma_1 Y_2
=
X'\beta_1
+
h(\varepsilon_2)
+
\eta.
\]
Multiplying both sides by $X$ and taking expectations yields
\[
E\!\left[
X(Y_1-\gamma_1 Y_2)
\right]
=
E[XX']\beta_1
+
E[X h(\varepsilon_2)]
+
E[X\eta].
\]
Under Assumption~\ref{ass:baseline}.\ref{ass:exoX},
$E[\varepsilon_2\mid X]=0$ and $E[\eta\mid X]=0$.
Since $h(\varepsilon_2)$ is measurable in $\varepsilon_2$,
it follows that $E[X h(\varepsilon_2)]=0$ and $E[X\eta]=0$.
Therefore,
\[
E[X(Y_1-\gamma_1 Y_2)]
=
\Sigma_{XX}\beta_1.
\]
By Assumption~\ref{ass:baseline}.\ref{ass:pd},
$\Sigma_{XX}$ is positive definite and invertible,
which yields \eqref{eq:beta1-id}.
\end{proof}

Proposition~\ref{prop:beta1-id} shows that it is reasonable that the identification problem is entirely concentrated on $\gamma_1$.
Once $\gamma_1$ is determined (point or set identified), $\beta_1$ follows from a standard linear projection argument.

\subsection{Orthogonality and robustness to nuisance estimation}

The construction of $\widehat{\Delta E}_n(\theta)$ involves several
generated objects: the first-stage residual $\hat\varepsilon_2$,
the sieve estimator $\hat h_j$, and the local density-ratio estimator
$\hat\ell$. We now clarify why these nuisance estimators do not affect
the first-order asymptotic behavior of the test statistic.

At the true parameter $\theta_0$, the population moment satisfies
\[
\Delta E[\psi_b(u)\tilde W_j(\theta_0)]
=
E_0[\psi_b(u) r_j(u)(e^{\ell(u)}-1)],
\]
since $E[\eta \mid X,\varepsilon_2]=0$ implies
$E_0[\psi_b(u)\eta(e^{\ell(u)}-1)]=0$.
Hence, the contribution of the structural disturbance $\eta$
is orthogonal to any measurable function of $u$.

This implies a local orthogonality property:
small perturbations of the nuisance components
$(\beta_2,h_j,\ell)$ affect the moment only through the
approximation error $r_j(u)$ and higher-order terms.
Formally, the Gateaux derivative of the moment with respect to
$n^{-1/2}$-scale perturbations of the nuisance functions
vanishes at $\theta_0$.

Under Assumption~\ref{ass:baseline}.\ref{ass:exoX},
$\hat\beta_2-\beta_2=O_p(n^{-1/2})$,
and therefore
\[
\hat u_i-u_i
=
X_i'(\beta_2-\hat\beta_2)
=
O_p(n^{-1/2}).
\]
Since $\psi_b(\cdot)$ and the sieve basis functions are bounded
and Lipschitz on $\mathcal U_b$,
the substitution of $\hat u_i$ for $u_i$
induces an $O_p(n^{-1/2})$ perturbation in
$\widehat{\Delta E}_n(\theta)$.
Consequently, the generated-regressor effect is asymptotically negligible
relative to the $n^{-1/2}$ stochastic fluctuation of the empirical process.

The estimator $\hat\ell$ enters the procedure only through
$\widehat V_b$, which affects the inequality bound
$\delta_h(j)\sqrt{\widehat V_b}$.
Under consistency of $\hat\ell$ and $\widehat V_b \xrightarrow{p} V_b$,
the plug-in error in $\sqrt{\widehat V_b}$
is $o_p(1)$.
Since the test statistic scales the sample moment by $\sqrt{n}$,
this plug-in effect is second-order and does not alter
the limiting distribution under the null.

To mitigate overfitting bias in $\hat h_j$ and $\hat\ell$,
we employ cross-fitting.
Conditional on the training folds,
the evaluation-fold moments are computed using nuisance estimates
constructed from independent subsamples.
This ensures that first-order empirical process fluctuations
remain centered and that nuisance estimation errors contribute
only through higher-order terms.

Together, these arguments justify treating
$\widehat{\Delta E}_n(\theta)$ as asymptotically linear
with influence function depending only on the data,
as stated in Assumption~\ref{ass:inference}.

\section{Conclusion}

This paper studies identification and inference in a triangular system with an endogenous regressor when the researcher does not impose a classical exclusion restriction and allows for a flexible, nonparametric control function. In this setting, baseline orthogonality conditions are generally insufficient for point identification, as the unknown control function can absorb the dependence that generates endogeneity.

We show that informative restrictions can nevertheless be obtained by exploiting distributional shifts in the first-stage disturbance induced by an auxiliary variable. Under a local restriction on the shape of the log density ratio and an explicit bound on the sieve approximation error of the control function, we derive moment inequalities that restrict the structural parameter of interest. The strength of identification depends transparently on two elements: the local relevance of the distributional shift and the quality of the sieve approximation. When the approximation error vanishes and local relevance holds, the identified set collapses to a point.

We also develop a fully operational inference procedure based on test inversion and multiplier bootstrap. The method accommodates generated regressors, cross-fitted sieve estimation, and locally estimated density-ratio nuisances while maintaining valid asymptotic coverage. The framework makes explicit how identification strength varies with tuning parameters and sensitivity bounds.

More broadly, the results illustrate that even in the absence of exclusion restrictions, partial and sometimes point identification can be recovered from weak distributional structure. This perspective complements existing approaches based on heteroskedasticity or parametric variance restrictions, and highlights the role of local distributional assumptions as a source of identifying power. Future work may extend the framework to multivalued or continuous auxiliary variables, nonlinear outcome equations, or high-dimensional covariates.

\bibliographystyle{ecta}
\bibliography{citation.bib}

@article{lewbel2012using,
  title={Using heteroscedasticity to identify and estimate mismeasured and endogenous regressor models},
  author={Lewbel, Arthur},
  journal={Journal of business \& economic statistics},
  volume={30},
  number={1},
  pages={67--80},
  year={2012},
  publisher={Taylor \& Francis}
}

@article{klein2010estimating,
  title={Estimating a class of triangular simultaneous equations models without exclusion restrictions},
  author={Klein, Roger and Vella, Francis},
  journal={Journal of Econometrics},
  volume={154},
  number={2},
  pages={154--164},
  year={2010},
  publisher={Elsevier}
}

\end{document}